\pgfplotsset{compat=1.3}
\newcommand{\Lap}{\operatorname{Lap}}
\newcommand{\E}{\mathbb{E}}
\newcommand{\Var}{\operatorname{Var}}
\newcommand{\p}{\prime}
\def\header{\vspace{2.5mm} \noindent}
\newtheorem{definition}{Definition}
\newtheorem{lemma}{Lemma}
        \newenvironment{customlegend}[1][]{        \begingroup
                        \csname pgfplots@init@cleared@structures\endcsname
        \pgfplotsset{#1}    }{                \csname pgfplots@createlegend\endcsname
        \endgroup
    }
            \def\addlegendimage{\csname pgfplots@addlegendimage\endcsname}
\pgfplotsset{
cycle list={{draw=black,mark=star,solid},
{draw=black, mark=square,solid},{draw=black,mark=+,solid},{black,mark=o},}}
\begin{document}

\title{Collecting and Analyzing Data from Smart Device Users with Local Differential Privacy}

\author{
    Th\^ong T. Nguy\~{\^e}n$^{1}$ \hspace{2mm} Xiaokui Xiao$^1$ \hspace{2mm} Yin Yang$^2$ \hspace{2mm} Siu Cheung Hui$^1$ \hspace{2mm} Hyejin Shin$^3$ \hspace{2mm} Junbum Shin$^3$ \\
    \and
    \alignauthor
    \affaddr{\hspace {3mm}$^1$Nanyang Technological University $\qquad \qquad \qquad \quad$ $^2$Hamad Bin Khalifa University} \\
    \email{\hspace {-12mm} \{s140046, xkxiao, asschui\}@ntu.edu.sg \hspace{20mm} yyang@qf.org.qa} \\[2mm]
    \affaddr{$^3$Samsung Electronics}
    \email{\{hyejin1.shin, junbum.shin\}@samsung.com}\\
}
\copyrightetc{ \the\acmcopyr}

\crdata{}

\maketitle

\begin{sloppy}

\begin{abstract}

Organizations with a large user base, such as Samsung and Google, can potentially benefit from collecting and mining users' data. However, doing so raises privacy concerns, and risks accidental privacy breaches with serious consequences. Local differential privacy (LDP) techniques address this problem by only collecting randomized answers from each user, with guarantees of plausible deniability; meanwhile, the aggregator can still build accurate models and predictors by analyzing large amounts of such randomized data. So far, existing LDP solutions either have severely restricted functionality, or focus mainly on theoretical aspects such as asymptotical bounds rather than practical usability and performance. Motivated by this, we propose Harmony, a practical, accurate and efficient system for collecting and analyzing data from smart device users, while satisfying LDP. Harmony applies to multi-dimensional data containing both numerical and categorical attributes, and supports both basic statistics (e.g., mean and frequency estimates), and complex machine learning tasks (e.g., linear regression, logistic regression and SVM classification). Experiments using real data confirm Harmony's effectiveness.
\end{abstract}

\section{Introduction} \label{sec:intro}

Smart devices connected to the Internet, including mobile phones, wearables, home appliances, sensors and vehicles, have become a part of everyday life in many parts of the world. The data collected by these devices could be an invaluable asset to hardware designers and application developers. For instance, a smartphone maker with its own customized UI such as Samsung TouchWiz could learn about the usage patterns of the various UI features such as Multi Window and One-Handed Mode, and focus on improving the popular ones. However, privacy concerns remain a major hurdle in collecting users' data. For example, the user may not want others to know her web browsing history, apps installed and locations visited. Even if the user (reluctantly) allows trusted organizations to collect her data, the possession of large amounts of sensitive personal data poses a major security risk. Accidental leakage of such personal data, which happened to AOL\footnote{\small \url{https://en.wikipedia.org/wiki/AOL_search_data_leak}}, Netflix\footnote{\small \url{http://www.wired.com/2009/12/netflix-privacy-lawsuit/}} and Ashley Madison\footnote{\small \url{http://edition.cnn.com/2015/08/27/opinions/yang-ashley-madison-hack/}}, led to serious consequences and substantial damage. So, organizations with a large user base commonly face a dilemma: either they collect users' personal data and become exposed to the risk of privacy breaches, or they do not collect such data and lose the opportunity of mining them.

Local differential privacy (LDP), which has been used in Google's Chrome browser \cite{RAPPOR2014}, addresses the above dilemma. The idea is to compute aggregates of users' data without collecting individuals precise personal information. Unlike other models of differential privacy \cite{DworkLaplace, dwork2004privacy}, which publish randomized aggregates but still collect the exact sensitive data, LDP avoids collecting exact personal information in the first place, thus providing a stronger assurance to the users and to the aggregator. Meanwhile, LDP satisfies the strong and rigorous privacy guarantees of differential privacy, i.e., the adversary cannot infer sensitive information of an individual with high confidence, regardless of the adversary's background knowledge.

Google's LDP solution in its Chrome browser, namely Rappor \cite{RAPPOR2014}, has rather limited functionalities. The core of Rappor is a randomized response mechanism \cite{Warner65} for a user to answer a yes/no question to the aggregator. A classic example is to collect statistics about a sensitive group (e.g., communists in the US), in which the aggregator asks each individual: ``Are you a communist?'' To answer this question, each individual tosses a coin, gives the true answer if it is a head, and a random yes/or answer otherwise. Clearly, this randomized approach provides plausible deniability to the individuals. Meanwhile, it is shown to satisfy $\epsilon$-differential privacy, and the strength of privacy protection (i.e., $\epsilon$) can be controlled by using an unfair coin \cite{RAPPOR2014}. Based on the collected randomized answers, the aggregator estimates the percentage of users whose true answer is ``yes'' (resp. ``no''). Besides simple counting, a follow-up paper \cite{FantiPE15} shows that Rappor can also compute other types of statistics such as joint-distribution estimation and association testing. However, three major limitations remain: first, Rappor cannot compute aggregates on numeric attributes, e.g., the average running time of an app. Second, the accuracy of Rappor deteriorates quickly with increasing number of attributes; as we show later in Section \ref{sec:basic-categorical}, to compute aggregates on $d$ independent attributes, the error of Rappor grows linearly with $d$, which is sub-optimal \cite{duchi2013local, DuchiJW14}. Third, it is unclear whether/how Rappor can handle complex and yet commonly used machine learning tasks, such as logistic regression and SVM classification.

LDP has also drawn considerable interest from the theory community. However, as we review in Section~\ref{sec:related}, their focus lies mostly in analyzing the asymptotical performance of basic building blocks of LDP, rather than practical systems and performance. In other words, the problem they study is ``what is the best that LDP can do'', rather than ``how to design a practical LDP solution''. Perhaps due to this reason, as we show in Section \ref{sec:basic-numeric}, the state-of-the-art solution (to our knowledge) for estimating mean values of multiple attributes under LDP is buggy and requires a fix. Meanwhile, we are not aware of any systematic approach that can perform common machine learning tasks under LDP on a large data domain containing both numeric and categorical attributes.

Motivated by this, we have been building Harmony, an advanced data analytics tool that conforms to LDP requirements. Harmony supports a multitude of common data analysis tasks over an arbitrary number of numerical or categorical attributes. Further, Harmony achieves both non-trivial asymptotical error bounds and improved accuracy in practice, compared to the current state of the art. As a case study, we show how Harmony can improve diagnostic information reporting in Samsung smartphones with strong privacy guarantees, as follows.

\subsection{Potential Use Case} \label{sec:usecase}

Samsung currently collects mobile phone usage information from users through a diagnostic tool bundled with the Samsung Android OS\footnote{The tool is available in the system under ``Settings > About device > Report diagnostic info''.}. The information collected include the mobile phone's settings (e.g., display settings, whether or not the location functionality is turned on), memory and battery usage, as well as other log data. The tool transmits the data in an unperturbed format to Samsung, and the transmission explicitly requires users' consent. However, as the collected information could potentially reveal sensitive information, it is beneficial to enhance the tool with privacy protection mechanisms, so as to provide rigorous privacy assurance to users.

Towards this end, local differential privacy is an attractive approach due to its strong privacy guarantee, and its practicability that has been demonstrated in Rappor \cite{RAPPOR2014}. However, Rappor's LDP mechanism focuses on collecting a single categorical attribute, whereas the data collected by the Samsung diagnostic tool include both numeric ones (e.g., battery usage) and categorical ones (e.g., display settings). Furthermore, Rappor does not support complex learning tasks (e.g., regressions, SVM), whereas such tasks are important for Samsung to build analytical models from the data, e.g., building an early symptom model to predict system errors based on other measurements such as battery usage, memory usage, active applications, etc. This necessitates the development of new data collection technique based on local differential privacy.

In the following, Section \ref{sec:prelim} provides the necessary background on LDP. Sections \ref{sec:basic} presents the fundamental LDP mechanisms in Harmony. Section \ref{sec:riskmini} applies Harmony to common data analytics tasks based on empirical risk minimization, including linear regression, logistic regression and SVM classification. Section \ref{sec:exp} contains an extensive set of experiments. Section \ref{sec:related} reviews related work. Finally, Section \ref{sec:conclusion} concludes with directions for future work.
 \section{Preliminaries} \label{sec:prelim}

\begin{table}[t]
\caption{Notations}
\centering
\label{tbl:notations}
\begin{tabular}{|l|l|}
\hline
$pdf(x)$        & Probability density at $x$          \\ \hline
$\langle a, b \rangle$            & A tuple of $a$ and $b$         \\ \hline
$a || b$              & A concatenation of $a$ and $b$        \\ \hline
$\operatorname{Bern}(p)$              & A Bernoulli distribution with parameter $p$        \\ \hline
$\log(x)$              & Natural logarithm of $x$        \\ \hline
$\operatorname{Dom}(f)$              & Domain of function $f$        \\ \hline
$[d]$              & Set of natural numbers $\{1, 2, \ldots, d\}$        \\ \hline
\end{tabular}
\end{table}

In our setting, an \emph{aggregator} (e.g., Samsung) collects data from a set of \emph{users} (e.g., smart device owners), and computes statistical models of the collected data. The goal is to maximize the accuracy of these statistical models, while preserving the privacy of the users. Following the local differential privacy model \cite{RAPPOR2014, BS15, DuchiJW14}, we assume that the aggregator already knows the identities (e.g., IP addresses) of the users, but not their private data. Formally, let $n$ be the total number of users, and $u_i$ ($1 \leq i \leq n$) denote the $i$-th user. Each user $u_i$'s private data is represented by a tuple $t_i$, which contains $d$ attributes $A_1, A_2, \ldots, A_d$. These attributes can be either numerical or categorical. Without loss of generality, we assume that each numeric attribute has a domain $[-1, 1]$, and each categorical attribute with $k$ distinct values has a discrete domain $\{1, 2, \ldots, k\}$.

To protect privacy, each user $u_i$ first perturbs her tuple $t_i$ using a randomized \emph{perturbation function} $f$. Then, she sends the perturbed data $f(t_i)$ to the aggregator instead of her true data record $t_i$. The perturbation function determines the privacy / utility tradeoff. As an extreme case, if $f(t_i) = t_i$ (i.e., no perturbation), the aggregator obtains perfect utility since it computes the statistical models based on the exact data; however, the privacy of the users is completely lost as the aggregator receives their sensitive data. The other extreme is that $f$ simply outputs a random tuple regardless of $t_i$, which leads to the highest level of privacy and zero utility, i.e., the aggregator learns nothing about the users' data.

Given a privacy parameter $\epsilon > 0$ that controls the privacy-utility tradeoff, we require that $f$ satisfies $\epsilon$-\emph{local differential privacy} ($\epsilon$-LDP) \cite{RAPPOR2014}, defined as follows:
\begin{definition}[$\epsilon$-local differential privacy]\label{def:ldp}
A randomized function $f$ satisfies $\epsilon$-local differential privacy if and only if for any two input tuples $t, t^\prime \in \text{Dom(f)}$ and for any possible output $t^*$ of $f$, we have:
 \begin{displaymath}
\Pr[f(t) = t^*] \leq e^\epsilon \times \Pr[f(t^\prime) = t^*].
 \end{displaymath}
\end{definition}

Basically, local differential privacy is a special case of differential privacy \cite{dwork2004privacy} where the random perturbation is performed by the users, not by the aggregator. In other words, the aggregator never possesses the exact private data of any user. According to the above definition, the aggregator, who receives the perturbed tuple $t^*$, cannot distinguish whether the true tuple is $t$ or another tuple $t'$ with high confidence (controlled by parameter $\epsilon$), regardless of the background information of the aggregator. This provides plausible deniability to the user. Note that in $\epsilon$-LDP, since random perturbation is done at each user, it is possible to achieve \emph{personalized privacy protection} by using different values of the privacy parameter $\epsilon$ at different users, depending on their respective privacy requirements. In this paper, we assume a universal $\epsilon$ for the ease of presentation and analysis.

We aim to support the following types of analytics tasks under $\epsilon$-LDP:
\begin{enumerate}[topsep = 6pt, parsep = 6pt, itemsep = 0pt, leftmargin=22pt]
\item {Mean value and frequency estimation.} These are two basic types of statistics. For each numeric attribute $A_j$, we aim to estimate the mean value of $A_j$ over all $n$ users, $\frac{1}{n} \sum_{i=1}^n {t_i[A_j]}$. For each categorical attribute $A_j'$, we aim to estimate the frequency of each possible value of $A_j'$, i.e., a histogram. Note that it is also possible to build such a histogram for a numeric attribute with a finite number of possible values.

\item {Empirical Risk Minimization.} These are advanced statistics commonly used in machine learning. Examples include linear regression, logistic regression, and support vector machines (SVM) \cite{CortesV95}.
\end{enumerate}
Unless otherwise specified, all expectations in this paper are taken over the random choices made by the algorithms considered.

\header
{\bf Remark.} In practice, the tuple $t_i$ of a user may change overtime (e.g., the phone usage information of a user would change day by day); accordingly, the aggregator may want to re-collect information from users after a certain time period (e.g., one week). In that case, we aim to ensure that the collection of each individual snapshot of $t_i$ satisfies $\epsilon$-differential privacy. One may argue that it is more desirable to ensure that all collected snapshots jointly achieve $\epsilon$-differential privacy, but to our knowledge, this is an open problem when the number of snapshots to be collected can be arbitrarily large.

 \section{Estimating Means and Frequencies} \label{sec:basic}

This section investigates the design of the perturbation function $f$ to support accurate estimation of mean values (resp.\ frequencies) of numeric (resp.\ categorical) attributes. For ease of exposition, Section~\ref{sec:basic-numeric} considers the case when all attributes $A_1, A_2, \ldots, A_d$ in the users' data have a numeric domain $[-1, 1]$; after that, Section~\ref{sec:basic-categorical} extends our discussions to the case when both numeric and categorical attributes are present.

\subsection{Estimating Mean Values for Numeric Attributes} \label{sec:basic-numeric}

Given a tuple $t_i$ containing $d$ numeric attributes, a naive design of the perturbation function $f$ is to apply the \emph{Laplace Mechanism} \cite{DworkLaplace}. In particular, let $t^*_i = f(t_i)$ be the perturbed tuple of user $u_i$, we have:
$$\forall j \in [d], t^*_i[A_j] = t_i[A_j] + \Lap\left(\frac{2d}{\epsilon}\right),$$
where $\Lap(\lambda)$ denotes a random variable that follows a Laplace distribution of scale $\lambda$, with the following probability density function:
$$pdf(x) = \frac{1}{2\lambda} \exp\left(-\frac{|x|}{\lambda}\right).$$

Once the aggregator receives all perturbed tuples, it simply computes their average $\frac{1}{n} \sum_{i = 1}^n t^*_i[A_j]$ as an estimate of the mean of $A_j$. Clearly, this estimate is unbiased, since the injected Laplace noise $\Lap\left(\frac{2d}{\epsilon}\right)$ in each $t^*_i[A_j]$ has zero mean. Meanwhile, it is easy to calculate that the expected error incurred by this estimator is $O\left(\frac{d}{\epsilon \sqrt{n}}\right)$, which is linear to the number of attributes $d$ and, thus, could be excessively large when there are many attributes. Note that this is a fundamental problem that also exists in the traditional differential privacy setting, when publishing statistics for multiple independent attributes. Perhaps rather surprisingly, this problem has not received much attention in the differential privacy literature; the first and only solution we are aware of is proposed by Duchi \emph{et al.}~\cite{DuchiJW14,Duchi2013} under the local differential privacy setting, presented below.

\header
{\bf Duchi \emph{et al.}'s method.}
 Algorithm~\ref{mech:pubduchi} shows the pseudo-code of Duchi \emph{et al.}'s method. The authors claim that this method satisfies $\epsilon$-local differential privacy, yields unbiased estimates for the mean value of each attribute, and incurs $O\big(\sqrt{d \log d} / (\epsilon \sqrt{n})\big)$ expected error for each attribute, which is proven to be asymptotically optimal. As we explain later, all three claims are incorrect, i.e., their method can violate differential privacy, lead to a biased estimate and incur a much higher amount of error.  It takes as input the exact tuple $t_i \in [-1, 1]^d$ of user $u_i$ and a privacy parameter $\epsilon$, and outputs a perturbed vector $t^*_i \in \{-B, B\}^d$, where $B$ is a constant decided by $d$ and $\epsilon$. Note that the output is binary for each attribute, i.e., it is either $B$ or $-B$. Therefore, it suffices for each user to transmit only one bit for each attribute to the aggregator.

Upon receiving the perturbed tuples, the aggregator simply computes the average value for each attribute over all users, and outputs these averages as the estimates of the mean values for their corresponding attributes. Next we focus on the calculation of $B$, which is rather complicated. Essentially, $B$ is a scaling factor to ensure that the expected value of a perturbed attribute is the same as that of the exact attribute value. First, we calculate:
\begin{equation} \label{eqn:basic-Cd}
C_d = \begin{cases}
2^{d-1}, &\text{if  $d$ is odd}\\
2^{d-1} - \frac{1}{2} \binom{d}{d/2}, &\text{otherwise}
\end{cases}
\end{equation}
Then, $B$ is calculated by:
\begin{equation} \label{eqn:basic-B}
B = \begin{cases}
\displaystyle \frac{2^d + C_d \cdot (e^\epsilon-1)}{  \binom{d-1} {(d-1)/2} \cdot (e^\epsilon - 1)}, &\text{if  $d$ is odd} \vspace{3mm} \\
\displaystyle \frac{2^d + C_d \cdot (e^\epsilon-1)}{ \binom{d-1} {d/2} \cdot (e^\epsilon - 1)}, &\text{otherwise}
\end{cases}
\end{equation}
Duchi \emph{et al.}~show that $\frac{1}{n} \sum_{i = 1}^n t^*_i[A_j]$ is an unbiased estimator of the mean of $A_j$, and
\begin{equation} \label{eqn:basic-error}
\E\left[\max_{j \in [d]} \left|\frac{1}{n} \sum_{i = 1}^n t^*_i[A_j] - \frac{1}{n} \sum_{i = 1}^n t_i[A_j]\right|\right] = O\left(\frac{\sqrt{d \log d}}{\epsilon \sqrt{n}}\right),
\end{equation}
which is asymptotically optimal \cite{DuchiJW14}.

\begin{algorithm}[t]
\begin{small}
\caption{Duchi \emph{et al.}'s Method \cite{DuchiJW14,Duchi2013}}\label{mech:pubduchi}
\SetKwInOut{Input}{input}
\SetKwInOut{Output}{output}
\Input{tuple $t_i \in [-1,1]^d$ and privacy parameter $\epsilon.$}
\Output{tuple $t^*_i \in \{-B, B\}^d.$}
    Generate a random tuple $v \in \{-1, 1\}^d$ by sampling each $v[A_j]$ independently from the following distribution:
    \parbox{62mm}{\begin{equation*}
    \Pr[v[A_j] = x] = \begin{cases}
        \frac{1}{2} + \frac{1}{2} t_i[A_j], & \text{if $x = 1$} \vspace{2mm} \\
        \frac{1}{2} - \frac{1}{2} t_i[A_j], & \text{if $x = -1$}\\
    \end{cases}
    \end{equation*}}\;
    Let $T^+$ (resp.\ $T^-$) be the set of all tuples $t^* \in \{-B, B\}^d$ such that $t^* \cdot v > 0$ (resp.\ $t^* \cdot v \le 0$)\;
    Sample a Bernoulli variable $u$ that equals $1$ with ${e^\epsilon / (e^\epsilon + 1)}$ probability\;
    \If{$u = 1$}
    {
        \Return a tuple uniformly at random from $T^+$\;
    }
    \Else
    {
        \Return a tuple uniformly at random from $T^-$\;
    }
\end{small}
\end{algorithm}

\header
{\bf Problems in Duchi \emph{et al.}'s method and a possible fix.} We implemented and evaluated Duchi \emph{et al.}'s method, but found that whenever the number $d$ of attribute is even, the method yields a biased estimation of the mean of each attribute and incurs significant error. Then, we also found that it violates differential privacy when $d$ is even. To illustrate, consider that $d=2$ and we have an input tuple $t_i = \langle 1, 1 \rangle$, i.e., $t_i[A_1] = t_i[A_2] = 1$. Then, Line 1 in Algorithm~\ref{mech:pubduchi} would generates a tuple $v = \langle 1, 1\rangle$. Let $B$ be as defined in Equation~\eqref{eqn:basic-B}, and $T^+$ and $T^-$ be as defined in Line 2 in Algorithm~\ref{mech:pubduchi}. It can be verified that $T^+$ and $T^-$ contain $1$ and $3$ tuples, respectively, with
\begin{align*}
T^+ &= \big\{\ \langle B, B \rangle \big\}, \textrm{ and} \\
T^- &= \big\{\ \langle -B, -B \rangle, \langle -B, B \rangle, \langle B, -B \rangle \big\}.
\end{align*}
Then, by Lines 3-8 in Algorithm~\ref{mech:pubduchi}, the method outputs $\langle B, B \rangle$ with $\frac{e^\epsilon}{e^\epsilon + 1}$ probability. In contrast, each tuple in $T^-$ has only $\frac{1}{3e^\epsilon + 3}$ probability to be output.

Now consider another input tuple $t^\prime_i = \langle -1, -1 \rangle$. It follows that, for $t^\prime_i$, the algorithm outputs $\langle B, B \rangle$ with only $\frac{1}{3e^\epsilon + 3}$ probability. As a consequence,
\begin{align*}
\Pr[f(t_i) = \big\langle B, B \rangle \big] & = 3 e^\epsilon \cdot \Pr\big[f(t^\prime_i) = \langle B, B \rangle\big] \\
& > e^\epsilon \cdot \Pr\big[f(t^\prime_i) = \langle B, B \rangle\big],
\end{align*}
which indicates that the algorithm does not satisfy $\epsilon$-differential privacy.

We find that the above problem is caused by Line 3 in Algorithm~\ref{mech:pubduchi}, in that the Bernoulli variable $u$ is incorrectly defined for the case for $d$ is even. To address the problem, one possible fix we found is to re-define $u$ as a Bernoulli variable such that
\begin{equation*}
\Pr[u = 1] = \frac{e^\epsilon \cdot C_d}{(e^\epsilon - 1)C_d + 2^d}.
\end{equation*}
It can be shown that, with this revised choice of $u$, Algorithm~\ref{mech:pubduchi} achieves $\epsilon$-differential privacy and ensures the error bound in Equation~\ref{eqn:basic-error}. We omit the proofs for brevity.

\header
{\bf Proposed method.} In what follows, we present an algorithm used in Harmony for perturbing a tuple that is conceptually simpler than Duchi \emph{et al.}'s method, but achieves the same privacy assurance and asymptotic error bound. Furthermore, our experiments (in Section~\ref{sec:exp}) show that the algorithm slightly outperforms Duchi \emph{et al.}'s method in terms of the empirical accuracy of the estimated means of numeric attributes. Additionally, our method is more efficient: in particular, each user only needs to transmit one bit to the aggregator, which is clearly optimal. Our algorithm is inspired by an existing approach \cite{BS15} for publishing categorical data, which we will discuss in Section~\ref{sec:basic-categorical}.

\begin{algorithm}[t]
\begin{small}
\caption{Proposed Method for Handling Numeric Attributes}\label{mech:pubreal}
\SetKwInOut{Input}{input}
\SetKwInOut{Output}{output}
\Input{tuple $t_i \in [-1,1]^d$ and privacy parameter $\epsilon.$}
\Output{tuple $t^*_i \in \left\{-\frac{e^\epsilon+1}{e^\epsilon-1}d, \:\: 0, \:\: \frac{e^\epsilon+1}{e^\epsilon-1} d\right\}^d.$}
    Let $t^*_i = \langle 0, 0, \ldots, 0\rangle$\;
    Sample $j$ uniformly at random from $[d]$\;
    Sample a Bernoulli variable $u$ such that
    \parbox{55mm}{$$\Pr[u = 1] = \frac{t_i[A_j]\cdot (e^\epsilon-1) + e^\epsilon + 1}{2e^\epsilon + 2}$$}\;
    \If{$u = 1$}
    {
        $t^*[A_j] = \frac{e^\epsilon+1}{e^\epsilon-1} \cdot d$\;
    }
    \Else
    {
        $t^*[A_j] = -\frac{e^\epsilon+1}{e^\epsilon-1} \cdot d$\;
    }
    \Return $t^*_i$
\end{small}
\end{algorithm}

Algorithm~\ref{mech:pubreal} shows the pseudo-code of our method. Given a tuple $t_i \in [-1, 1]^d$, the algorithm returns a perturbed tuple $t^*_i$ that has non-zero value on only one attribute $A_j$ ($j \in [d]$). Specifically, $A_j$ is selected uniformly at random from all $d$ attributes of $t_i$, and $t^*_i[A_j]$ is sampled from the following distribution:
\begin{equation} \label{eqn:basic-improved}
\Pr\big[t^*_i[A_j] = x\big] =
\begin{cases}
\frac{t_i[A_j]\cdot (e^\epsilon-1) + e^\epsilon + 1}{2e^\epsilon + 2}, & \textrm{if $x = \frac{e^\epsilon+1}{e^\epsilon-1} \cdot d$} \vspace{2mm} \\

\frac{-t_i[A_j]\cdot (e^\epsilon-1) + e^\epsilon + 1}{2e^\epsilon + 2}, & \textrm{if $x = -\frac{e^\epsilon+1}{e^\epsilon-1} \cdot d$}
\end{cases}
\end{equation}

Observe that in the above method, the output $t^*_i$ contains only one non-zero value, for the randomly chosen attribute $A_j$. This value is binary; hence, the user $u_i$ only needs to transmit 1 bit to the aggregator indicating its sign, and the latter can re-scale it using parameters $\epsilon$ and $d$. Further, as we show below the correctness of this method does not depend on the choice of $A_j$ as long as it is chosen uniformly at random. Therefore, the value of $j$ can be obtained, e.g., using a public source of random numbers such as a hash value of the user's ID. Therefore, the communication overhead between each user and the aggregator is exactly 1 bit.

The following lemmas establish the theoretical guarantees of Algorithm~\ref{mech:pubreal}.
\begin{lemma} \label{lmm:basic-privacy}
Algorithm~\ref{mech:pubreal} satisfies $\epsilon$-local differential privacy.
\end{lemma}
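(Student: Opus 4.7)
The plan is to verify Definition~\ref{def:ldp} directly by examining the ratio of output probabilities for any pair of inputs $t, t^\prime \in [-1,1]^d$. First I would characterize the support of $f$: because Algorithm~\ref{mech:pubreal} initializes $t_i^*$ to the zero vector, picks a single coordinate $j \in [d]$ uniformly, and then overwrites $t_i^*[A_j]$ with $\pm B$ where $B = \frac{e^\epsilon+1}{e^\epsilon-1} d$, every realizable output $t^*$ has exactly one nonzero entry, equal to $+B$ or $-B$. For any $t^*$ outside this support the inequality in Definition~\ref{def:ldp} holds vacuously (both sides are zero), so it suffices to check the $2d$ supported outputs.

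Next I would write $\Pr[f(t) = t^*]$ in closed form. If $t^*$ is the tuple whose unique nonzero coordinate is position $j$ with sign $s \in \{+1,-1\}$, then by independence of the two random choices made by the algorithm,
$$\Pr[f(t) = t^*] \;=\; \frac{1}{d}\cdot q_s\bigl(t[A_j]\bigr), \qquad q_s(x) \;=\; \frac{s\,x\,(e^\epsilon - 1) + e^\epsilon + 1}{2 e^\epsilon + 2},$$
where $q_{+}$ is the first branch of Equation~\eqref{eqn:basic-improved} and $q_{-}$ is the second. The ratio to be bounded therefore simplifies to $q_s(t[A_j])\,/\,q_s(t^\prime[A_j])$; the factor $1/d$ is input-independent and cancels, and the problem reduces to controlling a function of two scalar inputs in $[-1,1]$.

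The final step is a one-dimensional worst-case argument. Since $q_s$ is linear and monotone in its argument, it attains its maximum $e^\epsilon / (e^\epsilon + 1)$ at $x = s$ and its minimum $1 / (e^\epsilon + 1)$ at $x = -s$, both lying in $[-1,1]$. Hence for any $t, t^\prime$,
$$\frac{\Pr[f(t) = t^*]}{\Pr[f(t^\prime) = t^*]} \;=\; \frac{q_s(t[A_j])}{q_s(t^\prime[A_j])} \;\le\; \frac{e^\epsilon / (e^\epsilon + 1)}{1 / (e^\epsilon + 1)} \;=\; e^\epsilon,$$
which establishes $\epsilon$-LDP.

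I do not expect a real obstacle here; the main care needed is to ensure that the sign $s$ is treated consistently between numerator and denominator (so one does not inadvertently compare $q_{+}(t[A_j])$ against $q_{-}(t^\prime[A_j])$), and to observe that the uniform sampling of $j$ makes the $1/d$ factor input-independent, so it contributes nothing to the privacy ratio. With those observations in place the bound follows from a one-line substitution.
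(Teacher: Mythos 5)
Your proposal is correct and follows essentially the same route as the paper's proof: factor out the input-independent $1/d$ from the uniform coordinate choice, then bound the remaining scalar ratio by the maximum and minimum of the linear probability $\frac{x(e^\epsilon-1)+e^\epsilon+1}{2e^\epsilon+2}$ over $x \in [-1,1]$, giving exactly $e^\epsilon$. The only differences are presentational (you treat both signs at once via $q_s$ and explicitly note the vacuous off-support case, where the paper handles the positive sign and declares the other symmetric).
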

\begin{proof}
Let $t^*$ be an output of Algorithm~\ref{mech:pubreal}, and $A_j$ be the only attribute such that $t^*[A_j] \ne 0$. Let $t$ and $t^\prime$ be any two tuples, and $u$ (resp.\ $u^\prime$) be the Bernoulli variable generated in Line 3 of Algorithm~\ref{mech:pubreal} given $t$ (resp.\ $t^\p)$ as the input. In the following, we focus on the case when $t^*[A_j] = \frac{e^\epsilon + 1}{e^\epsilon - 1}d$; the case when
$t^*[A_j] = -\frac{e^\epsilon + 1}{e^\epsilon - 1}d$ can be analyzed in a similar manner.

By Algorithm~\ref{mech:pubreal}, we have
\begin{align*}
\frac{\Pr [t^* \mid t]}{\Pr[t^* \mid t^\p]} & = \frac {1/d \cdot \Pr[u = 1 \mid t]}{1/d \cdot \Pr[u^\p = 1 \mid t^\p] } \; \leq \frac {\max_t \Pr[u = 1 \mid t]} {\min_{t^\p} \Pr[u^\p = 1 \mid t^\p]}\\
& = \frac{\max_{t[A_j]\in [-1,1]} \left(t[A_j] \cdot (e^\epsilon-1) + e^\epsilon + 1\right)}{\min_{t^\p[A_j]\in [-1,1]} \left(t^\p[A_j]\cdot (e^\epsilon-1) + e^\epsilon + 1\right)}  = e^\epsilon.
\end{align*}
This completes the proof.
\end{proof}

\begin{lemma} \label{lmm:basic-unbias}
Let $t^*_i$ be the output of Algorithm~\ref{mech:pubreal} given an input tuple $t_i$. Then, for any $j \in [d]$, $\E[t^*[A_j]] = t[A_j]$.
\end{lemma}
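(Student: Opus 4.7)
The plan is to apply the law of total expectation, conditioning on the random index $j'$ chosen in Line 2 of Algorithm~\ref{mech:pubreal}. Fix an attribute $A_j$. Observe that $t^*_i[A_j]$ is non-zero only when $j'=j$, which happens with probability exactly $1/d$ since $j'$ is sampled uniformly from $[d]$. When $j' \ne j$, Line 1 leaves $t^*_i[A_j] = 0$, so this branch contributes $0$ to the expectation. Hence the entire calculation reduces to the conditional expectation $\E[t^*_i[A_j] \mid j' = j]$, scaled by $1/d$.

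Conditioned on $j' = j$, the output $t^*_i[A_j]$ takes value $+\frac{e^\epsilon+1}{e^\epsilon-1}\cdot d$ when the Bernoulli variable $u = 1$ and value $-\frac{e^\epsilon+1}{e^\epsilon-1}\cdot d$ when $u = 0$. I would write
\begin{equation*}
\E\bigl[t^*_i[A_j]\mid j'=j\bigr] = \frac{e^\epsilon+1}{e^\epsilon-1}\cdot d\cdot\bigl(\Pr[u=1]-\Pr[u=0]\bigr),
\end{equation*}
and then substitute the distribution of $u$ given in Line 3 of Algorithm~\ref{mech:pubreal}. The probabilities $\Pr[u=1]$ and $\Pr[u=0]$ share the denominator $2e^\epsilon + 2$, and their difference is $\frac{2\,t_i[A_j]\,(e^\epsilon-1)}{2e^\epsilon+2} = \frac{t_i[A_j](e^\epsilon-1)}{e^\epsilon+1}$. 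Multiplying this by the scaling factor $\frac{e^\epsilon+1}{e^\epsilon-1}\cdot d$ causes the $\epsilon$-dependent factors to cancel, leaving $d\cdot t_i[A_j]$.

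Combining the two branches by the law of total expectation gives
\begin{equation*}
\E\bigl[t^*_i[A_j]\bigr] \;=\; \frac{1}{d}\cdot d\cdot t_i[A_j] \;+\; \frac{d-1}{d}\cdot 0 \;=\; t_i[A_j],
\end{equation*}
which is the desired identity. There is no real obstacle here: the only substantive step is the algebraic cancellation between the Bernoulli bias $(e^\epsilon-1)/(e^\epsilon+1)$ and the output magnitude $(e^\epsilon+1)/(e^\epsilon-1)$, which is precisely the reason Algorithm~\ref{mech:pubreal} uses these particular constants. The factor $d$ in the output magnitude is likewise calibrated to cancel the $1/d$ sampling probability of the index $j'$, so that picking a single coordinate per user still yields an unbiased estimate of every coordinate's mean.
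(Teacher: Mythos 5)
Your proposal is correct and follows essentially the same route as the paper's proof: both are direct computations of $\E[t^*_i[A_j]]$ in which the $1/d$ selection probability cancels the factor $d$ in the output magnitude and the Bernoulli bias $(e^\epsilon-1)/(e^\epsilon+1)$ cancels the scaling $(e^\epsilon+1)/(e^\epsilon-1)$. Your explicit conditioning on the sampled index is just a cleaner bookkeeping of the same three-term expectation the paper writes out over the output values $\pm\frac{e^\epsilon+1}{e^\epsilon-1}d$ and $0$.
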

\begin{proof}
By Equation~\eqref{eqn:basic-improved},
\begin{align*}
\E[t^*[A_j]] = & \textstyle \Pr\left[t^*[A_j] = \frac{e^\epsilon+1}{e^\epsilon-1}\cdot d\right] \cdot \frac{e^\epsilon+1}{e^\epsilon-1}  \cdot d\\
  & {} \textstyle + \Pr\left[t^*[A_j] = -\frac{e^\epsilon+1}{e^\epsilon-1} \cdot d\right] \cdot \left(- \frac{e^\epsilon+1}{e^\epsilon-1}  \cdot d \right)  \\
  & {} \textstyle + \Pr\left[t^*[A_j] = 0\right] \cdot 0 \\
= & \frac{2 t[A_j] \cdot (e^\epsilon - 1)}{2e^\epsilon - 2} \; = t[A_j].
\end{align*}
\end{proof}

By Lemma~\ref{lmm:basic-unbias}, the server can use $\frac{1}{n} \sum_{i = 1}^n t^*[A_j]$ as an unbiased estimator of the mean of $A_j$. The following lemma shows the accuracy guarantee of this estimator.
\begin{lemma} \label{lmm:basic-accuracy}
For any $j \in [d]$, let $Z[A_j] = \frac{1}{n} \sum_{i=1}^n t^*_i[A_j]$ and $X[A_j] = \frac{1}{n} \sum_{i=1}^n t_i[A_j]$. With at least $1 - \beta$ probability,
$$\max_{j \in [d]} \big|Z[A_j] - X[A_j]\big| = O\left(\frac{\sqrt{d \log(d/\beta)}}{\epsilon \sqrt{n}}\right).$$
\end{lemma}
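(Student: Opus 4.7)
The plan is to apply a Bernstein-type tail bound to each coordinate and then union-bound over the $d$ attributes. By Lemma~\ref{lmm:basic-unbias}, $\E[Z[A_j]] = X[A_j]$, so the task is a concentration statement about a sum of $n$ independent, zero-mean, bounded random variables $Y_i := t^*_i[A_j] - t_i[A_j]$.

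First I would record the deterministic bound and the variance of a single term. Each $t^*_i[A_j]$ takes value $0$ with probability $1 - 1/d$ (when the algorithm samples some other coordinate in Line~2), and takes value $\pm R$ with $R = \frac{e^\epsilon+1}{e^\epsilon-1}\cdot d$ otherwise. Hence $|Y_i| \le 2R = O(d/\epsilon)$, and
\begin{equation*}
\Var(Y_i) \;\le\; \E[(t^*_i[A_j])^2] \;=\; \frac{1}{d}\cdot R^2 \;=\; \left(\frac{e^\epsilon+1}{e^\epsilon-1}\right)^{\!2} d \;=\; O\!\left(\frac{d}{\epsilon^2}\right),
\end{equation*}
where the last equality uses $\frac{e^\epsilon+1}{e^\epsilon-1} = \Theta(1/\epsilon)$ for $\epsilon\le 1$. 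This variance bound is the key quantitative input: a naive Hoeffding argument using only the range $R$ would yield an error that scales like $d/\epsilon$ rather than $\sqrt{d}/\epsilon$.

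Next I would apply Bernstein's inequality to $\sum_{i=1}^n Y_i$ for a fixed coordinate $A_j$: for any $\tau > 0$,
\begin{equation*}
\Pr\bigl[\,|Z[A_j] - X[A_j]| \ge \tau\,\bigr] \;\le\; 2\exp\!\left(-\frac{n\tau^2/2}{\sigma^2 + (2R/3)\tau}\right),
\end{equation*}
with $\sigma^2 = O(d/\epsilon^2)$ and $R = O(d/\epsilon)$. Setting this right-hand side to $\beta/d$ and solving, the variance term dominates as long as $\tau$ is of the order $\sqrt{d \log(d/\beta)}/(\epsilon\sqrt{n})$ (and $n$ is at least a logarithmic factor, which is the parameter regime of interest); this yields the single-coordinate bound $|Z[A_j]-X[A_j]| = O\!\bigl(\sqrt{d\log(d/\beta)}/(\epsilon\sqrt{n})\bigr)$ with failure probability at most $\beta/d$.

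Finally I would take a union bound over the $d$ attributes to promote the per-coordinate bound to the claimed bound on $\max_j |Z[A_j] - X[A_j]|$, which holds with probability at least $1-\beta$. The main obstacle, modulo routine arithmetic, is recognising that one must use Bernstein (variance-aware) rather than Hoeffding: the per-user perturbation has range $\Theta(d/\epsilon)$, so a range-only bound loses a factor of $\sqrt{d}$, whereas exploiting the fact that only $1/d$ of the mass sits at $\pm R$ recovers the optimal $\sqrt{d\log d}/(\epsilon\sqrt{n})$ rate promised in Equation~\eqref{eqn:basic-error}.
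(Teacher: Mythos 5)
Your proposal is correct and follows essentially the same route as the paper's own proof: bound $\Var[t^*_i[A_j]-t_i[A_j]]$ by $\left(\frac{e^\epsilon+1}{e^\epsilon-1}\right)^2 d$ using the fact that only a $1/d$ fraction of the mass sits at $\pm\frac{e^\epsilon+1}{e^\epsilon-1}d$, apply Bernstein's inequality per coordinate, and union-bound over the $d$ attributes. Your added remark on why a variance-aware bound (rather than Hoeffding) is essential is a correct and useful observation, but it does not change the argument.
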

\begin{proof}
First, observe that for any $i \in [d]$ and any $j \in [d]$, the variance of $t^*_i[a_j] - t_i[a_j]$ equals:
\begin{align*}
\Var[t^*_i[a_j] - t_i[a_j]] &= \Var[t^*_i[a_j]] \\
&= \E\left[(t^*_i[a_j])^2\right] - (\E[t^*_i[a_j])^2 \\
&= \textstyle \frac{1}{d} \left(\frac{e^\epsilon+1}{e^\epsilon-1} \cdot d\right)^2 - \left(t_i[a_j]\right)^2 \; \le  \textstyle \left(\frac{e^\epsilon+1}{e^\epsilon-1}\right)^2 \! \cdot d.
\end{align*}

By Bernstein's inequality,
\begin{align*}
& \Pr\big[|Z[a_j] - X[a_j]| \ge \lambda\big] \\
& {} \le 2 \cdot \exp\left(-\frac{n\lambda^2}{\frac 2 n \sum_{i=1}^n \Var[t^*_i[a_j] - t_i[a_j]] + \frac{2}{3}\lambda \cdot \frac{e^\epsilon + 1}{e^\epsilon-1} \cdot 2d}\right) \\
& {} = 2\cdot \exp \left(-\frac {n \lambda^2}{2d  \cdot \left(O(1/\epsilon^2) + \lambda \cdot O(1/\epsilon) \right)} \right).
\end{align*}
By the union bound, there exists $\lambda = O\left(\frac{\sqrt{d \log (d/\beta)}}{\epsilon \sqrt{n}}\right)$ such that $\max_{j \in [d]} |Z[a_j] - X[a_j]| < \lambda$
holds with at least $1-\beta$ probability.
\end{proof}

 \subsection{Estimating Frequencies for Categorical Attributes} \label{sec:basic-categorical}

We now focus on the case where each user's data record contains not only numeric attributes but also categorical ones. For each categorical attribute, the aggregator aims to build an accurate histogram containing the frequency estimate for each possible value in the attribute's domain. For example, Samsung may want to know the percentage of users who enable a specific setting, through the diagnostic information report app described in Section \ref{sec:usecase}. Note that we can convert a numeric attribute a categorical one (e.g., display brightness can be discretized to three levels: low, medium and high) and build a histogram accordingly.

\header
\textbf{Randomized response for binary attributes.} For a single binary attribute (e.g., WiFi on/off), it suffices to use the classic randomized response method \cite{Warner65} (also used in Rappor \cite{RAPPOR2014}) to estimate the distribution of users. Specifically, suppose that the domain of the binary attribute (let $A_j$) contains two possible values, $-1$ and $+1$. Each user $u_i$ reports her true answer $t_i[A_j]$ with probability $p$, and a random answer with probability $1-p$. The latter has the same probability to be $-1$ and $+1$; hence, its expected value is zero. Therefore, the expected value for $u_i$'s reported value is $p \cdot t_i[A_j]$; thus, we can obtain an unbiased estimate by multiplying the reported value by a scaling factor $c_\epsilon = 1/p$.

Meanwhile, comparing $u_i$'s true attribute value and her reported one, the two are the same with probability ${p+(1-p)/2}$, and they are different with probability $(1-p)/2$. According to Definition \ref{def:ldp}, $\epsilon$-local differential privacy requires that $\frac{p+(1-p)/2}{(1-p)/2} \leq e^{\epsilon}$. The equality holds when $p = \frac{e^\epsilon - 1}{e^\epsilon + 1}$. We thus arrive at the following unbiased mechanism that satisfies $\epsilon$-LDP: each user $u_i$ reports $c_\epsilon \cdot t_i[A_j] = \frac{e^\epsilon + 1}{e^\epsilon - 1}\cdot t_i[A_j]$ with probability $p+(1-p)/2 = \frac{e^\epsilon}{e^\epsilon + 1}$, and $-c_\epsilon \cdot t_i[A_j]$ otherwise (i.e., with probability $\frac{1}{e^\epsilon + 1}$).

Once the aggregator receives all reported values for attribute $A_j$, it computes the average over all users, which is an estimate of the mean value $\E[A_j]$ for $A_j$. Since $A_j$ can be either $+1$ or $-1$, the percentage of users with $+1$ (resp. $-1$) is $\frac{1+\E[A_j]}{2}$ (resp. $\frac{1-\E[A_j]}{2}$).

\begin{algorithm}[t]
\begin{small}
\caption{Bassily and Smith's method \cite{BS15}}\label{mech:pubbinvec}
\SetKwInOut{Input}{input}
\SetKwInOut{Output}{output}
\Input{$t_i[A_j] \in [k]$ for each user $u_i$, privacy budget $\epsilon$, confidence of the error bound $\beta$}
\Output{Frequency estimate for each of the $k$ values in attribute $A_j$}
    Compute $\gamma = \sqrt{\frac{\log(2k/\beta)}{\epsilon^2 n}}$\, and $m = \frac{\log(k+1) \log(2/\beta)}{\gamma^2}$\;
    Generate random matrix $\Phi \in \{\pm \frac 1 {\sqrt m}\}^{m \times k}$\;
    \For{$i$ = $1$ to $n$}
    {
        User $u_i$: draw $s ~\sim \text{Uniform}(\{1,2\dots, m\})$\;
        User $u_i$: draw $t \sim \text{Bern}(\frac{e^\epsilon}{e^\epsilon + 1})$\;
        User $u_i$: \textbf{if} $t=1$ \textbf{then} $\alpha = c_\epsilon m \Phi[s, t_i[A_j]]$ \textbf{else} $\alpha = -c_\epsilon m \Phi[s, t_i[A_j]]$, where $c_\epsilon = \frac{e^\epsilon+1}{e^\epsilon - 1}$\;
        User $u_i$: submit $\langle s, \alpha\rangle$, which represents a $k$-dimensional vector $z_i$ where the $s$-th entry is $\alpha$ and the other entries are $0$\;
    }
    Compute $\bar z = \frac 1 n \sum_{i=1}^n z_i$\;
    \For{$l$ = $1$ to $k$}
    {
       Estimate the frequency of the $l$-th value by the inner product of the $l$-th column of $\Phi$ and $\bar z$;
    }
    \Return $k$ frequency estimates obtained above;
\end{small}
\end{algorithm}

\header
\textbf{Bassily and Smith's method.} The problem is more complicated when the categorical attribute $A_j$ contains $k>2$ possible values. In this situation, the aggregator aims to build a histogram that contains the estimated frequency for each of the $k$ possible values. The current state-of-the-art to our knowledge is by Bassily and Smith \cite{BS15}, shown in Algorithm \ref{mech:pubbinvec}, which is proven to satisfy $\epsilon$-LDP and achieve an optimal asymptotical error bound. There are two main ideas in this method. First, the authors assume that the number of possible values $k$ in the categorical attribute is far larger than the number of users $n$; hence, the method applies random projection to reduce the dimensionality from $k$ to $m$. The value of $m$ is chosen carefully so as to obtain the asymptotically optimal error bound. This step essentially transforms the categorical attribute into $m$ binary ones.

Specifically, the random projection is done with a $m \times k$ matrix $\Phi$ in which each element is randomly set to either $+\frac{1}{\sqrt m}$ or $-\frac{1}{\sqrt m}$ with equal probability. This ensures that (i) the inner product of any column in $\Phi$ with itself is $1$ (which is where the absolute value of each element $\frac{1}{\sqrt m}$ comes from) and (ii) the inner product of two different columns in $\Phi$ has zero expected value, since the signs are randomly generated. Each user $u_i$'s attribute value $t_i[A_j]$ is then transformed to $m$ binary values by taking the $t_i[A_j]$-th column in $\Phi$.

The second idea is for each user to randomly pick one of the converted $m$ binary attributes, and report a randomized response using the method described earlier. Note that the randomized response needs to be scaled by a factor of $m$, since each of the $m$ binary attributes has probability $1/m$ to be chosen. The aggregator collects the average for each of the $m$ binary attributes, which is stored as a vector $\bar z$. To obtain the frequency estimate of a particular attribute value $l$, the method takes the inner product of $\bar z$ and the $l$-th column of $\Phi$, which can be proven to yield an unbiased frequency estimate for the $l$-th value in $A_j$. Meanwhile, the frequency estimate is proven to be within $O\left(\frac{\sqrt{\log(k/\beta)}}{\epsilon \sqrt{n}}\right)$ error with probability $1-\beta$ \cite{BS15}, where $\beta$ is an input to the algorithm.

\header
\textbf{Proposed method for a single categorical attribute.} Harmony generally follows Bassily and Smith's method to estimate value frequencies for a categorical attribute. However, we found that although Bassily and Smith's method achieves optimal asymptotic accuracy, in practice its accuracy tends to be unstable, especially for relatively small categorical domains. The reason is that the random projection matrix $\Phi$ introduces considerable noise; in particular, the inner product of two different columns is often non-zero unless $k$ is very large, which is magnified by a large number of users $n$. Hence, we propose an alternative solution that obtains higher accuracy when $k = o(n)$. In particular, instead of generating random matrix $\Phi$ (size $m \times k$ where $m = O(n)$), we construct a binary matrix of size $k \times k$ satisfying that any two column vectors are always orthogonal. The construction algorithm of this matrix can be found in the appendix.

\header
\textbf{Proposed method for multiple numeric and categorical attributes.}
Bassily and Smith's method is limited to a single categorical attribute. To extend it to multiple categorical attributes, a straightforward approach is to apply the method once for each attribute separately. In that case, however, the privacy budget $\epsilon$ needs to be divided among all attributes, so as to ensure $\epsilon$-LDP as a whole. Without loss of generality, assume that we have $d$ categorical attributes, and we assign $\epsilon/d$ budget to each of them. Then, the amount of noise incurred by Bassily and Smith's method on each attribute is increased $d$ times to $O\left(\frac{d\sqrt{\log(d/\beta)}}{\epsilon \sqrt{n}}\right)$, which is unsatisfactory when $d$ is large.

Another approach for extension is to (i) convert the $d$ categorical attributes into a ``composite'' attribute whose domain equals the Cartesian product of the individual attribute domains, and then (ii) apply Bassily and Smith's method on the composite attribute. This, however, only allows us to (accurately) derive the frequency each composite value (i.e., combination of values from all $d$ individual attributes), but does not provide quality estimation of the frequency of each individual value. Note that these limitations are not specific to Bassily and Smith's method; to our knowledge, there is no existing work (including Rappor) that are designed to handle multiple categorical attributes, let alone a mixture of numeric and categorical ones.

In Harmony, we use a simple and elegant solution to handle multiple attributes: for each numerical attribute, the aggregator estimates its mean value; for each categorical attribute, the aggregator estimates its value frequencies. In particular, given $d$ attributes $A_1, A_2, \ldots, A_d$, the solution asks each user to perform the following:
\begin{enumerate}[topsep = 6pt, parsep = 6pt, itemsep = 0pt, leftmargin=22pt]
    \item Draw $j$ uniformly at random from set $\{1, 2, \ldots, d\}$;
    \item If $A_j$ is a numeric attribute, then submit a noisy version of $t[A_j]$ computed as in Lines $3$-$7$ of Algorithm~\ref{mech:pubreal};
    \item Otherwise (i.e., $A_j$ is a categorical attribute), compute ${\langle s, \alpha\rangle}$ as in Lines $4$-$8$ of Algorithm~\ref{mech:pubbinvec}, then submits ${\langle s, d \cdot \alpha\rangle}$ to represent a $d$-dimensional vector where the $s$-th entry is $d \cdot \alpha$ and all other entries are zero.
\end{enumerate}

The above solution satisfies $\epsilon$-LDP, which follows from the fact that (i) each user randomly selects one attribute to submit, and (ii) the algorithm used for submitting the selected attribute is $\epsilon$-differentially private. For each numeric attribute, it is easy to see that the solution provides the same accuracy guarantee as Algorithm~\ref{mech:pubreal}, since both methods handle numeric attributes in exactly the same way. The following lemma states the accuracy guarantee of our solution for categorical attributes.

\begin{lemma} \label{lmm:basic-accuracy2}
For each categorical attribute $A_j$ with a domain $\{1, 2, \ldots, k\}$, let $x_l$ be the frequency of the $l$-th value of $A_j$, and $y_l$ be the estimation of $x_l$ returned by our solution. With at least $1 - \beta$ probability,
$$\max_{l \in [k]} \big|y_l - x_l\big| = O\left(\frac{\sqrt{d \log(k/\beta)}}{\epsilon \sqrt{n}}\right).$$
\end{lemma}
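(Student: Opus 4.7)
The plan is to inherit the structure of Bassily and Smith's analysis for Algorithm~\ref{mech:pubbinvec}, adapted to account for the per-user attribute sampling. Restricted to the users who happen to pick $A_j$ (an event of probability $1/d$ per user), each submission is an ordinary Bassily--Smith contribution multiplied by $d$. Because the aggregator still averages over all $n$ users, these two factors cancel in expectation, so $\E[\bar z \mid \Phi] = \Phi x$, where $x$ is the true $k$-dimensional frequency vector for $A_j$, and hence $\E[y_l \mid \Phi] = (\Phi^\top \Phi\, x)_l$ --- exactly the same conditional expectation as in pure Bassily--Smith.

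I would then split $y_l - x_l$ into a \emph{projection error} $(\Phi^\top\Phi\, x)_l - x_l$ and a \emph{sampling error} $y_l - \E[y_l \mid \Phi]$. For the projection error, I would invoke Bassily and Smith's own analysis of the random sign matrix $\Phi$, which, for the prescribed $m$, already certifies that $\max_l |(\Phi^\top\Phi\, x)_l - x_l| = O\bigl(\sqrt{\log(k/\beta)}/\sqrt n\bigr)$ with probability at least $1-\beta/2$; nothing changes here because $\Phi$ is constructed exactly as before. For the sampling error, let $g_i = \langle \Phi_l, z_i\rangle$, treating $z_i = 0$ when user $i$ did not pick $A_j$. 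With probability $1-1/d$ one has $g_i = 0$, and with probability $1/d$ one has $|g_i| = d c_\epsilon$, since the single non-zero coordinate of $z_i$ has magnitude $d c_\epsilon \sqrt m$ while the matching entry of $\Phi_l$ has magnitude $1/\sqrt m$. Hence $|g_i| \le d c_\epsilon = O(d/\epsilon)$ and $\Var(g_i) \le (1/d)(dc_\epsilon)^2 = O(d/\epsilon^2)$. Plugging these quantities into Bernstein's inequality and union-bounding over the $k$ possible values of $l$ gives $\max_{l\in[k]} |y_l - \E[y_l \mid \Phi]| = O\bigl(\sqrt{d\log(k/\beta)}/(\epsilon\sqrt n)\bigr)$ with probability at least $1-\beta/2$; combining with the projection-error bound by one more union bound yields the claimed statement.

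The main obstacle is really the projection-error step, which requires controlling the off-diagonal entries of the random Gram matrix $\Phi^\top\Phi$ uniformly over all $k$ columns; I would handle it by citing the corresponding argument from Bassily and Smith~\cite{BS15} rather than reproving it. The genuinely new content, and the reason for the extra $\sqrt d$ factor relative to Bassily and Smith's single-attribute bound, is the sampling-error step: the $1/d$ probability that a given user contributes to $A_j$ inflates the per-user variance by only a factor of $d$ rather than $d^2$, because that $1/d$ cancels one of the two $d$'s introduced by the compensating rescaling. This is precisely why ``spreading'' a fixed privacy budget across $d$ attributes via random selection costs only a $\sqrt d$ factor in accuracy, instead of the $d$ factor suffered by the naive composition approach discussed just before the lemma.
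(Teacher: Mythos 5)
Your proposal is correct and follows essentially the same route as the paper's (sketched) proof: both reduce the problem to observing that the per-user contribution to the estimator for $A_j$ is nonzero only with probability $1/d$ but scaled up by $d$, so its variance is inflated by a factor of $d$ (to $O(d/(n\epsilon^2))$) rather than $d^2$, and both then conclude via Bernstein's inequality and a union bound over the $k$ values, delegating the analysis of the random projection matrix $\Phi$ to Bassily and Smith. Your write-up is merely more explicit --- the separate projection-error/sampling-error decomposition and the computation of $\Var(g_i)$ are exactly the details the paper's sketch omits --- and the one minor imprecision (the projection-error bound should carry a $1/\epsilon$ factor, i.e.\ be $O\bigl(\sqrt{\log(k/\beta)}/(\epsilon\sqrt{n})\bigr)$) is harmless since that term is dominated by the sampling error anyway.
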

\begin{proof}[(Sketch)] Consider any user with a tuple $t$. With respect to $A_j$, our solution can be regarded as a method that (i) outputs nothing with ${d-1}/d$ probability, and (ii) with the remaining $1/d$ probability, applies Bassily and Smith's algorithm on $A_j$ and scale its output up by $d$ times. It follows that the variance of each of our frequency estimators for $A_j$ is $d$ times that of Bassily and Smith's estimator. Based on the analysis in \cite{BS15}, it can be shown that the variance of Bassily and Smith's estimator is $O\left(\frac{1}{n \epsilon^2}\right)$. Therefore, the variance of each of our frequency estimators is $O\left(\frac{d}{n \epsilon^2}\right)$. Combining this with Bernstein's inequality and the union bound, it can be proven that with at least $1-\beta$ probability, the error in any $k$ of our estimators is $O\left(\frac{\sqrt{d \log(k/\beta)}}{\epsilon \sqrt{n}}\right).$
\end{proof}

By Lemma~\ref{lmm:basic-accuracy2}, when there exist multiple categorical attributes, the error incurred by our approach is a factor of $O(\sqrt{d})$ smaller than that of a solution that repeatedly apply Bassily and Smith's method on each categorical attribute.

 \section{Building Machine Learning Models using Stochastic Gradient Descent} \label{sec:riskmini}

This section investigates building a large class of machine learning models that can be expressed as empirical risk minimization under $\epsilon$-local differential privacy. In particular, we focus on three common types of learning tasks: linear regression, logistic regression, and SVM classification. Section~\ref{sec:riskmini-methods} introduces the basic approaches for building these models, while Section~\ref{sec:riskmini-improve} discusses optimizations that lead to improved results in practice.

\subsection{Basic Methods} \label{sec:riskmini-methods}

Suppose that each user $u_i$ has a pair $\langle x_i, y_i\rangle$, where $x_i \in [-1, 1]^d$ and $y_i \in [-1, 1]$ (for linear regression) or $y_i \in \{-1, 1\}$ (for logistic regression and SVM classification). Let $\ell(\cdot)$ be a {\em loss function} that (i) maps a $d$-dimensional {\em parameter vector} $\beta$ into a real number and (ii) is parameterized by $x_i$ and $y_i$. We aim to identify a parameter vector $\beta^*$ such that

\vspace{-2mm}
$$\beta^* = \arg\min_{\beta} \frac{1}{n}\left(\sum_{i=1}^n \ell(\beta; x_i, y_i)\right) + \frac{\lambda}{2} \|\beta\|^2_2,$$
where $\lambda > 0$ is a regularization parameter. We consider three specific loss functions:
\begin{enumerate}[topsep = 6pt, parsep = 6pt, itemsep = 0pt, leftmargin=22pt]
\item Linear regression: $\ell(\beta; x_i, y_i) = (x_i^T \beta - y_i)^2$;

\item Logistic regression: $\ell(\beta; x_i, y_i) = \log\left(1 + e^{-y_ix_i^T\beta}\right)$;

\item SVM (hinge loss): $\ell(\beta; x_i, y_i) = \max\{0, 1 -y_i x_i^T\beta\}$.
\end{enumerate}
For convenience, we define
$$\ell^\prime(\beta; x_i, y_i) = \ell(\beta; x_i, y_i) + \frac{\lambda}{2} \|\beta\|^2_2.$$

One of the most common solutions to compute $\beta^*$ is {\em stochastic gradient descent (SGD)}. It starts from an initial parameter vector $\beta_0$, and iteratively updates it into $\beta_1, \beta_2, \ldots$ based on the following equation:
$$\beta_{k+1} = \beta_k - \gamma_k \cdot \nabla \ell^\prime(\beta_k; x, y),$$
where $\langle x, y \rangle$ is the tuple of a randomly selected user, $\nabla \ell^\prime(\beta_k; x, y)$ is the gradient of $\ell^\prime$ at $\beta_k$, and $\gamma_k$ is a constant typically set to $O(1/\sqrt{k})$. It terminates when the difference between $\beta_{k+1}$ and $\beta_k$ is sufficiently small.

Under our problem setting, however, $\nabla \ell^\prime$ is not directly available to the aggregator, and needs to be collected in a private manner. Towards this end, existing work \cite{HammCCBX15,DuchiJW14} has suggested that the aggregator may ask the selected user in each iteration to submit a noisy version of $\nabla \ell^\prime$, by using the Laplace mechanism or Duchi \emph{et al.}'s method (i.e., Algorithm~\ref{mech:pubduchi}). We can straightforwardly improve these existing approaches by perturbing $\nabla \ell^\prime$ using Algorithm~\ref{mech:pubreal} instead; however, we observe that such a solution is insufficient for our target application, as we explain in Section~\ref{sec:riskmini-improve}.

\subsection{Improvements} \label{sec:riskmini-improve}

\header{\bf \noindent Mini-batching.} We observe in our experiments that the aforementioned SGD approach yields rather inaccurate results, due to the noise injected in the gradient $\nabla \ell^\prime$ returned by each user. In particular, if each user applies Algorithm~\ref{mech:pubreal} to compute $\nabla \ell^\prime(\beta_k; x, y)$, the amount of noise in $\nabla \ell^\prime$ is $O\left(\frac{\sqrt{d \log d}}{\epsilon}\right)$, which is excessively large given that $x \in [-1, 1]^d$. To address this issue, we adopt {\em mini-batch} gradient descent instead of SGD. That is, each iteration of the algorithm, we involve a group $G$ of users, and ask each of them to submit a noisy version of the gradient; after that, we update the parameter vector $\beta_k$ with the mean of the noisy gradients, i.e.,
$$\textstyle \beta_{k+1} = \beta_k - \gamma_k \cdot \frac{1}{|G|}\sum_{i=1}^{|G|} \nabla \ell^*_i,$$
where $\nabla \ell^*_i$ is the noisy gradient submitted by the $i$-th user in $G$. This helps because the amount of noise in the average gradient is $O\left(\frac{\sqrt{d \log d}}{\epsilon \sqrt{|G|}}\right)$, which could be acceptable if $|G| = \Omega\left(d \log d/\epsilon^2\right)$.

However, when $d$ is sizable, $|G| = \Omega\left(d \log d/\epsilon^2\right)$ is large. As a consequence, when we allow each user to participate in at most one iteration of the algorithm, the maximum number of iterations (i.e., $n/|G|$) is small. In that case, the algorithm may terminate prematurely and return an inferior parameter vector. One may attempt to mitigate this problem by allowing each user to be involved in $m > 1$ iterations, but it would further increase the amount of noise in each gradient returned. To explain this, suppose that the $i$-th ($i \in [1, m]$) gradient returned by the user satisfies $\epsilon_i$-differential privacy. By the composition property of differential privacy \cite{McSherryT07}, if we are to enforce $\epsilon$-differential privacy for the user's data, we should have $\sum_{i=1}^m \epsilon_i \le \epsilon$. Consider that we set $\epsilon_i = \epsilon/m$. Then, the amount of noise in each gradient becomes $O\left(\frac{m \sqrt{d \log d}}{\epsilon}\right)$; accordingly, the acceptable mini-batch size becomes $|G| = \Omega\left(m^2 d \log d/\epsilon^2\right)$, which is $m^2$ times the acceptable size when we allow each user to participate in at most one iteration. It then follows that the total number of iterations in the algorithm is inversely proportional to $1/m$, i.e., setting $m > 1$ only degrades the performance of the algorithm.

\header{\bf \noindent Dimension reduction.} For linear regression, instead of increasing $m$, we propose to apply dimensionality reduction on each user's data, so as to reach an acceptable size of mini-batches. Specifically, the curator first generates a random $r \times d$ matrix $P$ where $r < d$ and each entry has an equal probability to be assigned $1/d$ or $-1/d$. Then, the curator shares $P$ with all users, and asks each user to convert her tuple $\langle x_i, y_i\rangle$ into a reduced tuple $\langle x^\p_i, y_i\rangle$, where $x^\p_i = P x$. In other words, we project $\{x_i\}$ into a random $r$-dimensional sub-space, and such a projection is known to preserve several important characteristics of the original data \cite{Achlioptas01}. It can be verified that $x^\p_i \in [-1, 1]^r$.

Subsequently, each user uses the reduced tuple $\langle x^\p_i, y_i\rangle$ to participate in the mini-batch gradient descent algorithm. In other words, each noisy gradient $\nabla \ell^*_i$ returned by the user is $r$-dimensional instead of $d$-dimensional. As such, the average noisy gradient obtained from a mini-batch $G$ of users has an error of $O\left(\frac{\sqrt{r \log r}}{\epsilon \sqrt{|G|}}\right)$ instead of $O\left(\frac{\sqrt{d \log d}}{\epsilon \sqrt{|G|}}\right)$. Accordingly, the acceptable mini-batch size is reduced to $|G| = \Omega\left(r \log r/\epsilon^2\right)$.

Algorithm~\ref{mech:pubiter} shows the pseudo-code of our mini-batch gradient descent method with dimension reduction, in the context of the Samsung diagnostic tool. The aggregator first generates a random $r \times d$ matrix $P$, and maintains a $r$-dimensional parameter vector $\alpha$ (Lines 1-3). (We use $\alpha$ instead of $\beta$ to denote the parameter vector to avoid confusion on its dimensionality.) After that, whenever a user with a tuple $\langle x_i, y_i\rangle$ comes online, she obtains $P$ and the current $\alpha$ from the aggregator (Line 6). Then, the user computes a reduced tuple $\langle x^\p_i, y_i\rangle$, as well as the gradient $\nabla_i = \nabla \ell^\p(\alpha; x^\p_i, y_j)$ (Line 7). If any entry of $\nabla_i$ is larger than $1$ (resp.\ smaller than $-1$), then the user resets the entry to $1$ (resp.\ $-1$) (Lines 8-9). This ensures that $\nabla_i \in [-1, 1]^d$, so that it can be a valid input to Algorithm~\ref{mech:pubreal}. After that, the user computes a noisy gradient $\nabla^*_i$ using Algorithm~\ref{mech:pubreal}, submits it to the aggregator, and then logs off (Line 10).

The aggregator computes the average noisy gradient from every $g$ users (where $g$ is an input parameter), and updates the parameter vector $\alpha$ accordingly (Lines 11-14). When the update to $\alpha$ is sufficiently small or when a sufficiently large number of users have participated, the aggregator terminates the algorithm (Lines 15-16).

\begin{algorithm}[t]
\begin{small}
\caption{An Iterative Method for Empirical Risk Minimization}\label{mech:pubiter}
\SetKwInOut{Input}{input}
\SetKwInOut{Output}{output}
\Input{positive integer $r$, mini-batch size $g$, privacy parameter $\epsilon$}
\Output{parameter vector $\alpha \in \mathbb{R}^r$}
    generates a random $r \times d$ matrix $P$ each entry has an equal probability to $1/d$ or $-1/d$\;
    initialize a counter $k = 0$, and learning rate $\gamma$\;
    initialize a $r$-dimensional vector $\nabla = \langle 0, 0, \ldots, 0\rangle$\;
    \While{true}
    {
        \If{a user with a tuple $\langle x_i, y_i\rangle$ comes online}
        {
            send $P$ to the user\;
            the user computes $x^\p_i = P x_i$, and derives $\nabla_i = \nabla \ell^\p(\alpha; x^\p_i, y_j)$\;
            \If{$\nabla_i \notin [-1, 1]^r$}
            {
                the user projects $\nabla_i$ onto $[-1, 1]^r$\;
            }
            the user applies Algorithm~\ref{mech:pubreal} on $\nabla_i$, and submits a noisy gradient $\nabla^*_i$\;
            $k = k + 1$, and $\nabla = \nabla + \nabla^*_i$\;
            \If{$k \mod g = 0$}
            {
                $\nabla = \frac{\nabla}{g}$, and $\gamma = 1/\sqrt{k/g}$\;
                $\alpha = \alpha - \gamma \cdot \nabla$\;
            }
            \If{$k$ is sufficiently large or $\alpha$ changes sufficiently small in the last update}
            {
                {\bf break}\;
            }
        }

    }
    \Return $\alpha$\;
\end{small}
\end{algorithm}

\begin{figure*}[t]
 \large
    \centering
\begin{subfigure}[b]{0.24\textwidth}
\begin{tikzpicture}[scale=0.5]
\begin{axis}[xmode=log,ymin=0,ymax=1.5,xtick={0.05, 0.1, 0.2, 0.4, 0.8},
xticklabels={0.05, 0.1, 0.2, 0.4, 0.8},
ylabel={$L_\infty$ error},xmin=0.05,
xmax=0.8,
xlabel={privacy budget $\epsilon$},]
\addplot [mark=o,mark size=4pt,color=black] coordinates{
(0.05, 0.204176 )
(0.1, 0.104995)
(0.2,  0.053417)
(0.4, 0.028152 )
(0.8, 0.014394 )};

\addplot [mark=square,mark size=4pt,color=black] coordinates{
(0.05,1.308343  )
(0.1, 0.620373)
(0.2,  0.310137)
(0.4, 0.147693 )
(0.8, 0.078879 )};

\addlegendentry{Our method}
\addlegendentry{Hybrid}

\end{axis}
\end{tikzpicture}
\caption{US - Categorical}
\end{subfigure}
\begin{subfigure}[b]{0.24\textwidth}
\begin{tikzpicture}[scale=0.5]
\begin{axis}[xmode=log,ymin=0,ymax=0.85,xtick={0.05, 0.1, 0.2, 0.4, 0.8},
xticklabels={0.05, 0.1, 0.2, 0.4, 0.8},
ylabel={$L_\infty$ error},xmin=0.05,
xmax=0.8,
xlabel={privacy budget $\epsilon$},]
\addplot [mark=o,mark size=4pt,color=black] coordinates{
(0.05, 0.204090 )
(0.1, 0.096314)
(0.2,  0.046303)
(0.4, 0.023612 )
(0.8, 0.013039 )};

\addplot [mark=square,mark size=4pt,color=black] coordinates{
(0.05,0.803837  )
(0.1, 0.355229)
(0.2,  0.206760)
(0.4, 0.092176 )
(0.8, 0.049845 )};

\addlegendentry{Our method}
\addlegendentry{Hybrid}

\end{axis}
\end{tikzpicture}
\caption{BR - Categorical}
\end{subfigure}
\begin{subfigure}[b]{0.24\textwidth}
\begin{tikzpicture}[scale=0.5]
\begin{axis}[xmode=log,ymin=0,ymax=0.07,xtick={0.05,0.1,0.2,0.4,0.8},
xticklabels={0.05,0.1,0.2,0.4,0.8},
ylabel={$L_\infty$ error},xmin=0.05,
xmax=0.8,
xlabel={privacy budget $\epsilon$},]

\addplot [mark=o,mark size=4pt,color=black] coordinates{
( 0.05  ,        0.0539382070707 )
( 0.1   ,        0.0266372106599 )
( 0.2   ,        0.0133722994315 )
( 0.4   ,        0.0068858221519 )
( 0.8   ,        0.00349457242252 )
};

\addplot [mark=square,mark size=4pt,color=black] coordinates{
( 0.05  ,        0.06685048625 )
( 0.1   ,        0.031690375625 )
( 0.2   ,        0.01639815375 )
( 0.4   ,        0.00790808125 )
( 0.8   ,        0.00414400875 )
};
\addlegendentry{Our method}
\addlegendentry{Hybrid}
\end{axis}
\end{tikzpicture}
\caption{US - Numeric}
\end{subfigure}
\begin{subfigure}[b]{0.24\textwidth}
\begin{tikzpicture}[scale=0.5]
\begin{axis}[xmode=log,ymin=0,xtick={0.05,0.1,0.2,0.4,0.8},
xticklabels={0.05,0.1,0.2,0.4,0.8},xmin=0.05,
xmax=0.8,
ylabel={$L_\infty$ error},
y tick label style={
    /pgf/number format/.cd,
        fixed,
        precision=2,
    /tikz/.cd
},
xlabel={privacy budget $\epsilon$},]

\addplot [mark=o,mark size=4pt,color=black] coordinates{
( 0.05  ,        0.0810361946734 )
( 0.1   ,        0.0405761457413 )
( 0.2   ,        0.0202238597716 )
( 0.4   ,        0.00992819381313 )
( 0.8   ,        0.00531713430013 )
};

\addplot [mark=square,mark size=4pt,color=black] coordinates{
( 0.05  ,        0.105065722222 )
( 0.1   ,        0.0521644610245 )
( 0.2   ,        0.0257926524664 )
( 0.4   ,        0.0124129617978 )
( 0.8   ,        0.00603039954853 )
};

\addlegendentry{Our method}
\addlegendentry{Hybrid}

\end{axis}
\end{tikzpicture}
\caption{BR - Numeric}
\end{subfigure}

    \caption{Experiments on estimating means and frequencies.}
    \label{fig:exp-mean-freq}
\end{figure*}
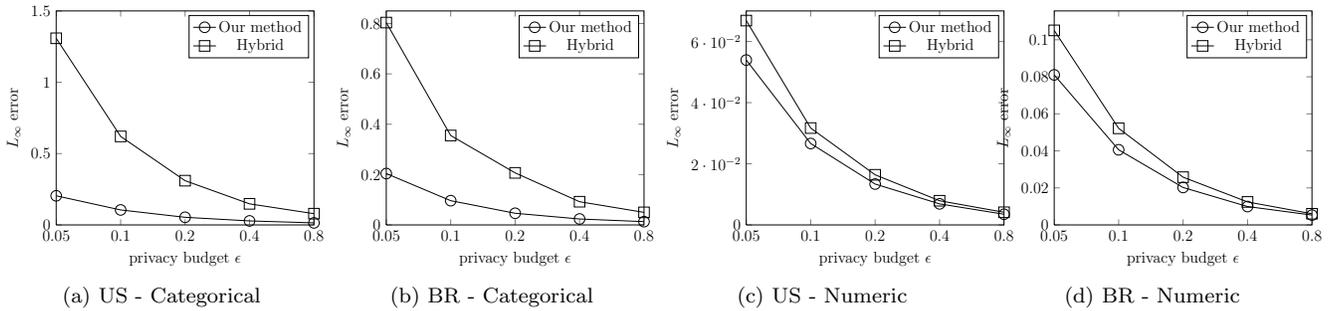

\section{Experiments} \label{sec:exp}

\subsection{Experimental Settings}
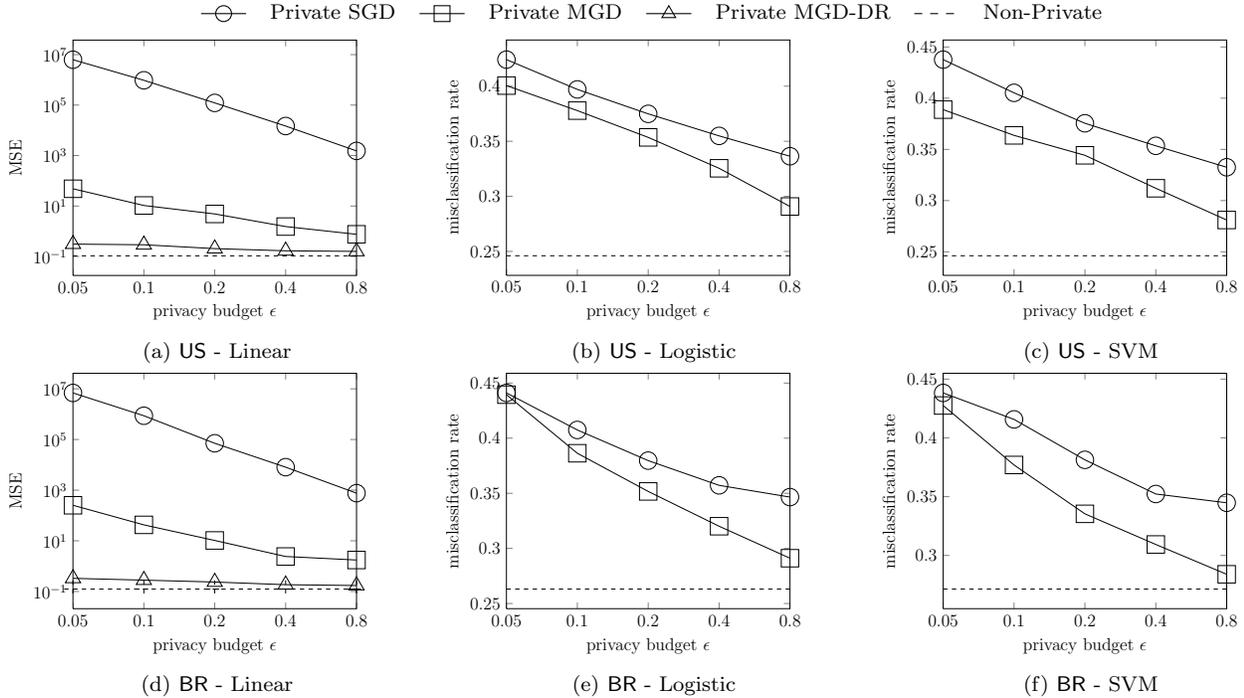
\begin{figure*}[t]
\large
\centering
 
   \begin{tikzpicture}
    \small
        \begin{customlegend}[legend columns=4,legend style={align=left,draw=none,column sep=2ex},legend entries={Private SGD, Private MGD, Private MGD-DR, Non-Private}]
        \addlegendimage{mark=o,mark size=3pt,solid,line legend}
        \addlegendimage{mark=square,mark size=3pt,solid}
        \addlegendimage{mark=triangle,mark size=3pt,solid}
        \addlegendimage{dashed,mark size=3pt}
        \end{customlegend}
     \end{tikzpicture}
     \hspace{10ex}
\begin{subfigure}[b]{0.32\textwidth}
\begin{tikzpicture}[scale=0.55]
\begin{axis}[
xtick={0.05,0.1,0.2,0.4,0.8},
xticklabels={0.05,0.1,0.2,0.4,0.8},
ymode = log,
xmin=0.05,
xmax=0.8,
xmode=log,
ylabel={MSE},
xlabel={privacy budget $\epsilon$},]

\addplot [mark=o, mark size=6pt,color=black] coordinates{
( 0.05     ,     6287122.48686 )
( 0.1     ,     960241.758476 )
( 0.2     ,     121961.847655 )
( 0.4     ,     14780.8136734 )
( 0.8     ,     1528.65032144 )
};

\addplot [mark=triangle, mark size=6pt,color=black] coordinates{
( 0.05  ,        0.3088838 )
( 0.1   ,        0.291185325 )
( 0.2   ,        0.2049374 )
( 0.4   ,        0.167753388889 )
( 0.8   ,        0.158758444444 )
};

\addplot [mark=square, mark size=6pt,color=black] coordinates{
( 0.05  ,        48.2389715575 )
( 0.1   ,        10.5001094074 )
( 0.2   ,        4.85417782407 )
( 0.4   ,        1.53365133333 )
( 0.8   ,        0.756012703704 )
};

\addplot [dashed, mark size=6pt,color=black] coordinates{
(0.05    ,      0.105894)
(0.1    ,     0.105894)
(0.2     ,    0.105894)
(0.4     ,    0.105894)
(0.8     ,    0.105894)
};

\end{axis}
\end{tikzpicture}
\caption{\textsf{US} - Linear}
\end{subfigure}
\begin{subfigure}[b]{0.32\textwidth}
\begin{tikzpicture}[scale=0.55]
\begin{axis}[
xtick={0.05,0.1,0.2,0.4,0.8},
xticklabels={0.05,0.1,0.2,0.4,0.8},
xmin=0.05,
xmax=0.8,
xmode=log,
ylabel={misclassification rate},
xlabel={privacy budget $\epsilon$},]

\addplot [mark=o, mark size=6pt,color=black] coordinates{
(0.05,0.424135041667)
(0.1,0.397075727273)
(0.2,0.374881722222)
(0.4,0.35487325)
(0.8,0.3364076875)
};

\addplot [mark=square, mark size=6pt,color=black] coordinates{
(0.05,0.400542926606)
(0.1,0.377820103774)
(0.2,0.35351575)
(0.4,0.32545315534)
(0.8,0.290787339806)
};

\addplot [dashed, mark size=6pt,color=black] coordinates{
(0.05,0.246063)
(0.1,0.246063)
(0.2,0.246063)
(0.4,0.246063)
(0.8,0.246063)};

\end{axis}
\end{tikzpicture}
\caption{\textsf{US} - Logistic}
\end{subfigure}
\begin{subfigure}[b]{0.32\textwidth}
\begin{tikzpicture}[scale=0.55]
\begin{axis}[
xtick={0.05,0.1,0.2,0.4,0.8},
xticklabels={0.05,0.1,0.2,0.4,0.8},
xmode=log,
xmin=0.05,
xmax=0.8,
ylabel={misclassification rate},
xlabel={privacy budget $\epsilon$},]

\addplot [mark=o, mark size=6pt,color=black] coordinates{
(0.05,0.4376538)
(0.1,0.405204425)
(0.2,0.3754819)
(0.4,0.353484657895)
(0.8,0.332550764706)
};

\addplot [mark=square, mark size=6pt,color=black] coordinates{
(0.05,0.388824658915)
(0.1,0.363690742188)
(0.2,0.344229)
(0.4,0.312033403226)
(0.8,0.281063243902)
};

\addplot [dashed, mark size=6pt,color=black] coordinates{
(0.05,0.246172)
(0.1,0.246172)
(0.2,0.246172)
(0.4,0.246172)
(0.8,0.246172)
};

\end{axis}
\end{tikzpicture}
\caption{\textsf{US} - SVM}
\end{subfigure}
\begin{subfigure}[b]{0.32\textwidth}
\begin{tikzpicture}[scale=0.55]
\begin{axis}[
xtick={0.05,0.1,0.2,0.4,0.8},
xticklabels={0.05,0.1,0.2,0.4,0.8},
ymode = log,
xmode=log,
xmin=0.05,
xmax=0.8,
ylabel={MSE},
xlabel={privacy budget $\epsilon$},]

\addplot [mark=o, mark size=6pt,color=black] coordinates{
( 0.05     ,     6889569.38021 )
( 0.1     ,     866543.579075 )
( 0.2     ,     70792.4341219 )
( 0.4     ,     8071.96353976 )
( 0.8     ,     760.210399583 )
};

\addplot [mark=triangle, mark size=6pt,color=black] coordinates{
( 0.05  ,        0.329723727941 )
( 0.1   ,        0.279884066667 )
( 0.2   ,        0.235770195489 )
( 0.4   ,        0.185013687023 )
( 0.8   ,        0.171521885496 )
};

\addplot [mark=square, mark size=6pt,color=black] coordinates{
( 0.05  ,        252.754903683 )
( 0.1   ,        42.5536999464 )
( 0.2   ,        10.3221948333 )
( 0.4   ,        2.40070314815 )
( 0.8   ,        1.74559125926 )
};

\addplot [dashed, mark=+, mark size=6pt,color=black] coordinates{
(0.05    ,      0.12416333333)
(0.1    ,     0.12416333333)
(0.2     ,    0.12416333333)
(0.4     ,    0.12416333333)
(0.8     ,    0.12416333333)
};

\end{axis}

\end{tikzpicture}
\caption{\textsf{BR} - Linear}
\end{subfigure}
\begin{subfigure}[b]{0.32\textwidth}
\begin{tikzpicture}[scale=0.55]
\begin{axis}[
xtick={0.05,0.1,0.2,0.4,0.8},
xticklabels={0.05,0.1,0.2,0.4,0.8},
xmin=0.05,
xmax=0.8,
xmode=log,
ylabel={misclassification rate},
xlabel={privacy budget $\epsilon$},]

\addplot [mark=o, mark size=6pt,color=black] coordinates{
(0.05,0.44105747)
(0.1,0.407447836735)
(0.2,0.379633322917)
(0.4,0.357206666667)
(0.8,0.346549819149)
};

\addplot [mark=square, mark size=6pt,color=black] coordinates{
(0.05,0.4394718)
(0.1,0.386306875)
(0.2,0.351691315)
(0.4,0.32011164)
(0.8,0.291095995)
};

\addplot [dashed, mark size=6pt,color=black] coordinates{
(0.05,0.263119833333333)
(0.1,0.263119833333333)
(0.2,0.263119833333333)
(0.4,0.263119833333333)
(0.8,0.263119833333333)
};

\end{axis}
\end{tikzpicture}
\caption{\textsf{BR} - Logistic}
\end{subfigure}
\begin{subfigure}[b]{0.32\textwidth}
\begin{tikzpicture}[scale=0.55]
\begin{axis}[
xtick={0.05,0.1,0.2,0.4,0.8},
xticklabels={0.05,0.1,0.2,0.4,0.8},
xmode=log,
xmin=0.05,
xmax=0.8,
ylabel={misclassification rate},
xlabel={privacy budget $\epsilon$},]

\addplot [mark=o, mark size=6pt,color=black] coordinates{
(0.05,0.43822852)
(0.1,0.41570793)
(0.2,0.381309575)
(0.4,0.3522503)
(0.8,0.344796745)
};

\addplot [mark=square, mark size=6pt,color=black] coordinates{
(0.05,0.4275431)
(0.1,0.376935815)
(0.2,0.335387385)
(0.4,0.309327135)
(0.8,0.283983395)
};

\addplot [dashed, mark size=6pt,color=black] coordinates{
(0.05,0.2714201875)
(0.1,0.2714201875)
(0.2,0.2714201875)
(0.4,0.2714201875)
(0.8,0.2714201875)
};

\end{axis}
\end{tikzpicture}
\caption{\textsf{BR} - SVM}
\end{subfigure}
\caption{Experimental results for empirical risk minimization.}\label{fig:exp-ERM}
\end{figure*}
 
For experimental repeatability, we use two public datasets extracted from the \textit{Integrated Public Use Microdata Series} \cite{IPUMS}, US and BR, which contains census records from the United States and Brazil, respectively. US contains $9$M tuples and $23$ attributes, among which $6$ are numeric (e.g., age) and $17$ are categorical (e.g., gender); BR has $4$M records and $18$ attributes, among which $6$ are numeric and $12$ are categorical. Both datasets contain a numeric attribute ``total income'', which we use as the dependent attribute in linear regression, logistic regression, and SVM. We normalize the domain of each numeric attribute into $[-1, 1]$.

\subsection{Estimating Means and Frequencies}

In the first set of experiments, we consider the task of collecting a noisy tuple from each user to estimate the mean of each numeric attribute and the frequency of each categorical value. As mentioned in Section~\ref{sec:intro}, none of the existing solutions can directly support this task, since they are designed for either numeric or categorical attributes, but not both. To address this issue, we construct a method (referred to as {\em Hybrid}) by combining the best existing solutions as follows. Let $t$ be a tuple with $d_n$ numeric attributes and $d_c$ categorical attributes. Given $t$ and a privacy budget $\epsilon$, {\em Hybrid} first constructs a $d_n$-dimensional tuple $t^\prime$ that contains all numeric values in $t$, and then release a noisy version of $t^\prime$ by invoking Duchi \emph{et al.}'s method (see Section~\ref{sec:basic-numeric}) on $t^\prime$, using a privacy budget of $d_n\epsilon/d$. After that, for each categorical value in $t$, {\em Hybrid} applies Bassily and Smith's method (see Section~\ref{sec:basic-categorical}) to release a noisy version with a privacy budget of $\epsilon/d$. By the composition property of differential privacy \cite{McSherryT07}, {\em Hybrid} ensures $\epsilon$-LDP. Intuitively, {\em Hybrid} is a best-effort approach to incorporate two states of the art that are designed only for numeric attributes (i.e., Duchi \emph{et al.}'s method) and a single categorical attribute (i.e., Bassily and Smith's method), respectively.

We apply our solution in Section~\ref{sec:basic} and {\em Hybrid} on both US and BR to generate noisy tuples, and then use the noisy tuples to estimate the frequency of each value in each categorical domain in US and BR. For each method, we measure the $L_\infty$ error in the estimated value frequencies, and we take the average measurement from $100$ runs. Figures \ref{fig:exp-mean-freq}a and \ref{fig:exp-mean-freq}b illustrate the results as $\epsilon$ varies. Observe that our method is significantly more accurate than {\em Hybrid} in all cases, and its error is only around $1/4$ of the error incurred by {\em Hybrid}. This is consistent with the analysis in Section~\ref{sec:basic-categorical} that (i) our method has $O\left(\frac{\sqrt{d\log k}}{\epsilon \sqrt{n}}\right)$ estimation error, and (ii) repeatedly applying Bassily and Smith's method on each categorical leads to $O\left(\frac{d\sqrt{\log k}}{\epsilon \sqrt{n}}\right)$ error.

We also use the noisy tuples to estimate the mean of each numeric attribute, and we measure the $L_\infty$ error of each method, averaged over $100$ runs. Figure \ref{fig:exp-mean-freq}c (resp.\ \ref{fig:exp-mean-freq}d) shows the results on US (resp.\ BR) as a function of the privacy budget $\epsilon$. Observe that our solution slightly outperforms {\em Hybrid}, regardless of the dataset used and the value of $\epsilon$. We also note that, compared with {\em Hybrid} (which applies Duchi \emph{et al.}'s method to handle numeric attributes), our solution has a much lower communication cost on each user (since it only requires each user to transfer $1$ bit), and is much simpler.

\subsection{Empirical Risk Minimization}

In the second set of experiments, we consider linear regression, logistic regression, and SVM classification on US and BR. For both datasets, we use the numeric attribute ``total income'' as the dependent variable, and all other attributes as independent variables. Following the standard practice, we transform each categorical attribute $A_j$ with $k$ values into $k-1$ binary attributes with a domain $\{-1, 1\}$, such that (i) the $l$-th ($l < k$) value in $A_j$ is represented by a $1$ on the
$l$-th binary attribute and a $-1$ on each of the remaining $k-2$ attributes, and (ii) the $k$-th value in $A_j$ is represented by $-1$ on all binary attributes. After this transformation, the dimensionality of US (resp.\ BR) becomes $85$ (resp.\ $95$). For logistic regression and SVM (which requires the dependent variable to be binary), we convert ``total income'' into a binary attribute by mapping all values which are greater than or equal to the mean to $1$, and all other values to $-1$.

We evaluate four methods: (i) a private version of SGD that involves one user in each iteration, and asks the user to submit a noisy gradient using Duchi \emph{et al.}'s method; (ii) mini-batch gradient descent (MGD), which involves $g$ users in each iteration, and uses the average noisy gradients of those users (generated with Algorithm~\ref{mech:pubreal}) to update the parameter vector; (iii) MGD with dimension reduction (DR), which is an improved version of MGD that projects the users' data onto an $r$-dimensional sub-space before the learning task (this method is applied for linear regression only); (iv) a non-private version of SGD. Based on the analysis in Section~\ref{sec:riskmini}, we set the mini-batch size for MGD and MGD-DR to $g = \max \{ 2d\log d /\epsilon^2, n/1000\}$ and $g = \max \{2r\log r / \epsilon^2, n/1000\}$, respectively. The term $n/1000$ is to guarantee that our mini-batch size is not too small when $\epsilon$ is large.  In addition, we set $r = 20$. For all methods, we set the regularization factor $\lambda = 10^{-4}$.

On each dataset, we use $10$-fold cross validation to assess the performance of each method. Figure \ref{fig:exp-ERM}a (resp.\ \ref{fig:exp-ERM}d) shows the mean squared error (MSE) of the linear regression model generated by each method, under various values of $\epsilon$. Private SGD incurs prohibitive errors in all cases, due to the large amount of noise in the gradient that it obtains in each iteration. MGD alleviates this issue with mini-batches, but still provides unsatisfactory accuracy for linear regression. The reason, as we mentioned in Section~\ref{sec:riskmini-improve}, is that MGD requires using a large mini-batch size $g$ when $d$ is large, which in turn leads to a small total number of iterations and degrades its performance. MGD-DR overcomes the drawback of MGD by incorporating dimension reduction to reduce the required mini-batch size, and hence, it is able to achieve an accuracy that is close to the non-private SGD.

Figures \ref{fig:exp-ERM}b and \ref{fig:exp-ERM}e (\ref{fig:exp-ERM}c and \ref{fig:exp-ERM}f) illustrate the misclassification rate of each method for logistic regression (resp.\ SVM). 
Overall, our experimental results demonstrate the effectiveness of mini-batches and dimension reduction for empirical risk minimization under local differential privacy.

 \section{Related Work} \label{sec:related}

Differential privacy \cite{dinur2003revealing, dwork2004privacy, DworkLaplace} is a strong, mathematically rigorous framework for privacy protection. Unlike earlier privacy-preserving data publication methods which are largely syntactic, differential privacy provides semantic, information-theoretic guarantees on individuals' privacy. Hence, since its proposal in 2003 it had attracted much attention from various research communities in computer science, including theory \cite{DworkR14}, machine learning \cite{SarwateC13}, data management \cite{YangZMWX12}, and systems \cite{ShiCRCS11}.

Earlier models of differential privacy assume a trusted data curator, who collects and manages the exact private information of individuals, and releases statistics derived from the data under differential privacy requirements. In practice, however, users may not want to share private information with anyone, including the central data curator. Recently, much attention has been shifted to the local differential privacy model, which eliminates the data curator and the collection of exact private information. Specifically, Duchi \emph{et al.} \cite{duchi2013local} systematically investigate the concept of local differential privacy, propose the minimax framework for LDP based on the information theory, prove upper and lower error bounds of LDP-compliant methods, and analyze the  trade-off between privacy and accuracy. Kairouz \emph{et al.}~\cite{kairouz2014differentially} show that a version of randomized response is an optimal mechanism for frequency estimation on a single binary attribute. Kairouz \emph{et  al.}~\cite{kairouz2014extremal} study the problem with a categorical attribute with an arbitrary number of possible values, propose two mechanisms: binary and randomized response mechanisms, and prove their optimality when the privacy budget is low and high, respectively. Bassily and Smith \cite{BS15} propose an asymptotically optimal solution for building succinct histograms over a large categorical domain under LDP.

Erlingsson \emph{et al.}~\cite{RAPPOR2014} propose the RAPPOR framework, which is based on the randomized response mechanism for publishing a vector of binary values under LDP. They use this mechanism with a Bloom filter, which intuitively adds another level of protection and increases the difficulty for the adversary to infer private information. As a result, it also becomes more difficult derive statistics from the collected data, and they propose a sophisticated solution for this purpose. A follow-up paper \cite{FantiPE15} extends Rappor to more complex statistics such as joint distributions and association testing, as well as categorical attributes that contain a large number of potential values, such as a user's home page.
 
\section{Conclusion}\label{sec:conclusion}

This work systematically investigates the problem of collecting and analyzing users' personal data under $\epsilon$-local differential privacy, in which the aggregator only collects randomized data from the users, and computes statistics based on such data. The proposed solution Harmony is able to collect data records that contain multiple numeric and categorical attributes, and compute accurate statistics from simple ones such as mean and frequency to complex machine learning models such as linear regression, logistic regression and SVM classification. Harmony achieves both optimal asymptotic error bound and high accuracy in practice. Meanwhile, it is highly efficient in terms of communication and computational overhead. Extensive experiments demonstrate its effectiveness on real data. In the next step, we plan to investigate the application of Harmony in a real use case such as Samsung's diagnostic info report app.

\bibliographystyle{abbrv}
\bibliography{main}

\balance

\appendix
\balance

\begin{algorithm}[t]
\begin{small}
\caption{Generation of orthogonal set}\label{mech:pubfixedduchi}
\SetKwInOut{Input}{input}
\SetKwInOut{Output}{output}
\Input{Dimensionality $d = 2^k.$}
\Output{Set $S$ of $d$ vectors.}
    $S =  \{ [1, -1], [1, 1] \}$\;
    \While {$|S| < d$}
    {
        $S^\prime = \emptyset$\;
        \For {$v \in S$}
        {
            $S^\prime \gets S^\prime \cup \{v|| v, v || (-v)\}$\;
        }
        $S \gets S^\prime$\;
    }
    \Return $S$\;
\end{small}
\end{algorithm}

\begin{lemma}The set $S$ returned from Algorithm~\ref{mech:pubfixedduchi} is an orthogonal set.
\end{lemma}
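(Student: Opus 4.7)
The plan is to proceed by induction on $k$, where $d = 2^k$, showing that after each iteration of the while loop the current set $S$ consists of pairwise orthogonal $\{-1,+1\}$-vectors. This is essentially the Sylvester construction of a Hadamard matrix, so the argument should be clean.

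For the base case ($k=1$), I would verify directly that the initial set $S = \{[1,-1],[1,1]\}$ is orthogonal, since $\langle [1,-1],[1,1]\rangle = 1 - 1 = 0$. I would also observe, as a strengthened inductive hypothesis, that every vector $v \in S$ satisfies $\|v\|^2 = |v|$ (the dimensionality of $v$), since all entries are $\pm 1$; this norm fact will be needed in the inductive step.

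For the inductive step, assume $S$ is a pairwise-orthogonal set of $\pm 1$-vectors of common length $2^k$, and consider the new set $S'$ built by replacing each $v \in S$ with the two concatenations $v\|v$ and $v\|(-v)$. I would split the verification into cases depending on whether the two chosen vectors of $S'$ come from the same parent $v \in S$ or from two distinct parents $u \neq v \in S$. If they come from the same parent, then $\langle v\|v,\, v\|(-v)\rangle = \|v\|^2 - \|v\|^2 = 0$ using the norm observation. If they come from distinct parents, then in each of the four sign combinations the inner product reduces to $\pm 2\langle u,v\rangle$ or $0$, and the inductive hypothesis $\langle u,v\rangle = 0$ finishes the case. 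I would also note that the $\pm 1$ structure and the doubling of dimension are preserved, so the strengthened hypothesis carries forward to the next iteration.

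The only mild obstacle is bookkeeping: one must be careful that $S'$ has exactly $2^{k+1}$ distinct vectors (so that no vector is paired with itself in the orthogonality claim). Since $v\|v \neq v\|(-v)$ for any nonzero $v$, and since distinct parents $u \neq v$ yield distinct concatenations, this is immediate. After the loop terminates, $|S| = d$ and all pairs are orthogonal, which is the claim.
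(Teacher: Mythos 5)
Your proposal is correct and follows essentially the same route as the paper: induction on the doubling step, with the case split between two children of the same parent (where the inner product is $\|v\|^2 - \|v\|^2 = 0$) and children of distinct parents (where it reduces to a multiple of $\langle u,v\rangle = 0$ by the inductive hypothesis). Your version is somewhat more careful than the paper's, which states the same two cases but omits the norm observation and the distinctness bookkeeping.
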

\begin{proof}
We prove this lemma by induction. For the base case, observe that the initial value of  $S =  \{ [1, -1], [1, 1] \}$ is a orthogonal set. Now assume that $S= \{v_i\} $ is a orthogonal set. We will prove that $S^\prime =  \{ v_i || v_i, v_i|| (-v_i)\}$ is also orthogonal set. For any $v^{\prime \prime}, v^\prime \in S^\prime$, consider the dot product $\langle v^{\prime \prime}, v^\prime \rangle$, there are two general cases: (i) $\langle v^{\prime \prime}, v^\prime \rangle = \langle v_i || \pm v_i, v_j || \pm v_j\rangle$ and (ii) $\langle v^{\prime \prime}, v^\prime \rangle = \langle v_i ||v_i ,v_i || - v_i\rangle$. In both cases, the inner product equals zero. Thus, lemma is proved.
\end{proof}

 \balance

\end{sloppy}

\end{document}